\DeclareMathOperator{\eval}{eval}
\DeclareMathOperator{\HSIstar}{\text{HSI*}}
\DeclareMathOperator{\HSI}{\text{HSI}}
\newtheorem{theorem}{Theorem}
\newtheorem{lemma}{Lemma}
\newtheorem{corollary}{Corollary}
\newtheorem{example}{Example}
\newcommand{\inter}[2]{\llbracket{#1}\rrbracket_{#2}}
\newcommand{\typeof}[1]{\llbracket{#1}\rrbracket}
\newcommand{\Nplus}{\mathbb{N}^+}
\newcommand{\inl}[1]{\iota_1{#1}}
\newcommand{\inr}[1]{\iota_2{#1}}
\newcommand{\case}[5]{\delta[{#1}|{#2}.{#3}|{#4}.{#5}]}
\newcommand{\casetwo}[5]{\delta[{#1}\\ & |{#2}.{#3}\\ & |{#4}.{#5}]}
\newcommand{\fst}[1]{\pi_1{#1}}
\newcommand{\snd}[1]{\pi_2{#1}}
\newcommand{\pair}[2]{\langle{#1},{#2}\rangle}
\newcommand{\Ecal}{\mathcal{E}}
\newcommand{\Estar}{\mathcal{E^*}}
\begin{document}
\title{Axioms and Decidability for Type Isomorphism\\ in the Presence of Sums}

\author{\IEEEauthorblockN{Danko Ilik}
\IEEEauthorblockA{Inria Saclay -- Île de France\\
1 rue Honor\'e d'Estienne d'Orves\\
Campus de l'\'Ecole Polytechnique\\
91120 Palaiseau, France\\
Email: danko.ilik@inria.fr}}

\maketitle

\begin{abstract}
We consider the problem of characterizing isomorphisms of types, or, equivalently, constructive cardinality of sets, in the simultaneous presence of disjoint unions, Cartesian products, and exponentials. Mostly relying on results about polynomials with exponentiation that have not been used in our context, we derive: that the usual finite axiomatization known as High-School Identities (HSI) is complete for a significant subclass of types; that it is decidable for that subclass when two types are isomorphic; that, for the whole of the set of types, a recursive extension of the axioms of HSI exists that is complete; and that, for the whole of the set of types, the question as to whether two types are isomorphic is decidable when base types are to be interpreted as finite sets. We also point out certain related open problems.
\end{abstract}


\section{Introduction}

The class of types built from Cartesian products ($\tau\times\sigma$), disjoint unions ($\tau+\sigma$) and function spaces ($\tau\to\sigma$) lies at the core of type systems for programming languages and constructive systems of Logic. How useful could a programming language be if it did not include pairing, enumeration and functions? Likewise, how useful would a constructive logic be, if it did not have conjunction, disjunction, and implication?

It is then a strange state of affairs that we still have open questions regarding general properties of this basic class of types. In this article, we revisit the problem of when two types can be considered to be isomorphic and we show that this problem can be tackled using existing results from Mathematical Logic.

Let us be more precise. The language of \emph{polynomials with exponentiation and with positive coefficients} is defined inductively by
\[
\Ecal \ni f,g ::= 1 ~|~ x_i ~|~ f+g ~|~ fg ~|~ g^f,
\]
where $x_i$ is a variable for $i\in\mathbb{N}$. This language determines the class of types that we are interested in, by the simple translation,
\begin{align*}
  \typeof{1} &= \mathbf{1}\\
  \typeof{x_i} &= \mathbf{x_i}\\
  \typeof{g^f} &= \typeof{f} \to \typeof{g}\\
  \typeof{f g} &= \typeof{f} \times \typeof{g}\\
  \typeof{f+g} &= \typeof{f} + \typeof{g},
\end{align*}
where $\mathbf{1}$ denotes the singleton type and $\mathbf{x_i}$ denotes a type variable (that can be instantiated during an interpretation in a concrete setting). By abuse of language, we will say that a type $\tau$ belongs to $\Ecal$ ($\tau\in\Ecal$) when a polynomial with exponentiation $f\in\Ecal$ exists such that $\tau=\typeof{f}$. Throughout the paper, we use the plain equality symbol ``$=$'' to stand for identity i.e. definitional equality.

The types of $\Ecal$ are inhabited by terms of a lambda calculus, in the usual way, following the typing system shown in Figure~\ref{fig:terms:typing}. The equality between two typed terms is the relation $=_{\beta\eta}$ given by the usual axioms in Figure~\ref{fig:terms:equality}.

Two types $\tau$ and $\sigma$ are called \emph{isomorphic} (notation $\tau\cong\sigma$) when there is a pair of lambda terms $\phi^{\tau\to\sigma}$ and $\psi^{\sigma\to\tau}$ that are mutually inverse, that is, $\lambda x. \phi (\psi x) =_{\beta\eta} \lambda x. x$ and $\lambda y. \psi (\phi y) =_{\beta\eta} \lambda y. y$. The importance of this notion in ``practice'' is as follows. 

In typed programming languages, to be able to say when two types are isomorphic amounts to being able to say when two programs implement essentially the same type signature: a program of type $\tau$ can be coerced back and forth to type $\sigma$ without loss of information. One can use this, for example, to search over a library of routines for a routine of a type coercible to the type needed by the programmer \cite{rittri91,cosmo05b}.

In propositional logic, knowing that two types are isomorphic means that the corresponding formulas (built from $\wedge, \vee$ and $\to$) are intuitionistically equivalent.

In Constructive Mathematics, type isomorphism coincides with the notion of \emph{constructive cardinality} \cite{mines88,lombardi11} that says that two sets (i.e., types) are isomorphic if they have indistinguishable structure, which is stronger than the classical notion of cardinality relying on ``number of elements''.

What is known about the isomorphism of types of $\Ecal$, in general, are the following facts, both proved in \cite{fiore06}.
\begin{theorem}[Soundness of HSI]\label{fac:hsi:sound}
  If $\HSI \vdash f\doteq g$, then $\typeof{f}\cong\typeof{g}$. In fact, we have:
  \[
  \HSI\vdash f\doteq g ~\Rightarrow~ \typeof{f}\cong\typeof{g} ~\Rightarrow~ \Nplus\vDash f\equiv g.
  \]
\end{theorem}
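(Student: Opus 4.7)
The statement splits into two implications, which I would prove separately.

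For the first, $\HSI \vdash f\doteq g \Rightarrow \typeof{f}\cong\typeof{g}$, I would proceed by induction on the HSI derivation. In the base case, I exhibit, for each axiom of HSI, a pair of mutually inverse $\lambda$-terms: commutativity of product via $\lambda p.\pair{\snd p}{\fst p}$, associativity via an analogous repairing, the unit law $\mathbf{1}\times A\cong A$ via the second projection, distributivity $A\times(B+C)\cong A\times B + A\times C$ by case analysis on the sum component, the identity $(B+C)\to A \cong (B\to A)\times(C\to A)$ by splitting a function along the two injections, $(A\times B)^C\cong A^C\times B^C$ by pointwise projection, currying $(A^B)^C\cong A^{B\times C}$, and the exponential unit laws $A^{\mathbf{1}}\cong A$ and $\mathbf{1}^A\cong\mathbf{1}$. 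The inductive step then treats the reflexivity, symmetry, transitivity and congruence rules of equational logic by taking identities, inverses, compositions, and the obvious combinators on isomorphisms supplied by the induction hypothesis (e.g.\ from $\phi,\psi$ witnessing $\typeof{f}\cong\typeof{f'}$ and $\typeof{g}\cong\typeof{g'}$ one builds $\lambda h.\phi\circ h\circ\psi^{-1}$ witnessing $\typeof{g^f}\cong\typeof{g'^{f'}}$).

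For the second implication, $\typeof{f}\cong\typeof{g} \Rightarrow \Nplus\vDash f\equiv g$, I would use the standard finite-set model. Given any assignment $\alpha:\{x_i\}\to \Nplus$, interpret $\mathbf{x_i}$ as a concrete finite set of cardinality $\alpha(x_i)$ and interpret $+,\times,\to$ as disjoint union, Cartesian product, and function space. A routine induction on $f$ shows that the cardinality of the interpretation of $\typeof{f}$ equals $\eval_\alpha(f)$. Since $=_{\beta\eta}$ is sound for this set-theoretic semantics, the assumed isomorphism descends to a pair of mutually inverse functions between the two finite denotations, hence to a bijection, hence to the equality $\eval_\alpha(f)=\eval_\alpha(g)$. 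As $\alpha$ was arbitrary, $\Nplus\vDash f\equiv g$.

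The main difficulty is not conceptual but bookkeeping: for each of the roughly ten HSI axioms one has to write down both directions of the isomorphism and verify by a small calculation that the two compositions reduce by $=_{\beta\eta}$ to the identity, and one has to deal with every congruence rule. The second implication, in contrast, is essentially a corollary of the observation that the type constructors of $\Ecal$ admit a faithful interpretation in finite sets under which cardinality coincides with polynomial evaluation; the only mild subtlety is the restriction to $\Nplus$ rather than $\mathbb{N}$, which ensures that $B^A$ is nonempty whenever $B$ is and that the cardinality function always takes values in $\Nplus$, matching the fact that every type in $\Ecal$ is inhabited.
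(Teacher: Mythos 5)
Your proof is correct and follows essentially the same route as the paper: the first implication is exactly the explicit-isomorphism induction that the paper itself carries out (for the larger system HSI*) in the proof of Theorem~\ref{thm:hsistar:sound}, and the second implication is the standard finite-set cardinality argument underlying the cited result of \cite{fiore06}, which the paper states without reproving. The only blemish is a harmless slip in the exponential congruence case: since the exponent position is contravariant, with $\phi$ witnessing $\typeof{f}\cong\typeof{f'}$ and $\psi$ witnessing $\typeof{g}\cong\typeof{g'}$ the combinator should be $\lambda h.\,\psi\circ h\circ\phi^{-1}$, not $\lambda h.\,\phi\circ h\circ\psi^{-1}$.
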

\begin{theorem}\label{fac:hsi:incomplete}
  Isomorphism is not finitely axiomatizable, that is, for no finite set of axioms T can we show that $\typeof{f}\cong\typeof{g}$ always implies $\text{ T } \vdash f\doteq g$.
\end{theorem}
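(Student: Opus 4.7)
The plan is to argue by contradiction, leveraging Gurevič's theorem that the equational theory of $\Nplus$ (with $+$, $\cdot$, and exponentiation) is not finitely axiomatizable, strengthening Wilkie's classical independence result for a single identity from $\HSI$. Suppose for contradiction that some finite set $T$ of axioms were sound and complete for type isomorphism on $\Ecal$. Since $\HSI$ is sound for isomorphism (Theorem~\ref{fac:hsi:sound}) and $T$ is assumed complete, we may take $\HSI\subseteq T$ without loss of generality; moreover, by the second implication of Theorem~\ref{fac:hsi:sound}, every $T$-provable identity is valid in $\Nplus$, so $T$ is a finite set of $\Nplus$-valid identities extending $\HSI$.

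The goal is then to exhibit an infinite family of genuine type isomorphisms $\typeof{f_n}\cong\typeof{g_n}$ whose arithmetic counterparts $f_n\equiv g_n$ are independent over $\HSI$ in the sense of Gurevič. The natural candidates are the higher analogues of Wilkie's identity used in Gurevič's construction. Once such a family is in hand, completeness of the finite $T$ would force $T\vdash f_n\doteq g_n$ for every $n$; but then $T$ would entail an infinite Gurevič-independent family of $\Nplus$-valid identities, contradicting the fact that $T$ itself is finite (so has bounded proof-theoretic strength as an equational theory over $\Nplus$).

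The main obstacle is precisely the bridging step: Gurevič gives \emph{arithmetic} independence, but what is needed is type isomorphism. The gap between isomorphism and $\Nplus$-validity (the second implication of Theorem~\ref{fac:hsi:sound}) is exactly what must be controlled. My approach would be either (i) to construct, for each $n$, explicit mutually inverse $\lambda$-terms $\phi_n, \psi_n$ witnessing $\typeof{f_n}\cong\typeof{g_n}$, generalizing the sum-swap construction that realises Wilkie's identity as an isomorphism; or (ii) to place isomorphism at the correct level via a semantic model (Kripke-style, or the finite-set interpretation highlighted elsewhere in this paper) in which isomorphism is characterised exactly by a theory to which Gurevič's argument still applies. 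Of the two, (i) is more informative but combinatorially delicate because of the interaction of sums with exponentials under $\beta\eta$, so I expect the bulk of the effort to lie in discharging the isomorphism witnesses rather than in the final counting argument.
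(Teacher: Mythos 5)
Your overall reduction is the same one the paper uses: assume a finite, sound and complete axiom set $T$, absorb $\HSI$ into it, note that by soundness every $T$-theorem is valid in $\Nplus$, and then derive a contradiction from Gurevi\v{c}'s theorem by exhibiting his equations $G_n$ as genuine type isomorphisms. One point of emphasis to correct: the contradiction does not come from a general principle that a finite theory ``has bounded proof-theoretic strength''; it comes from the precise form of Gurevi\v{c}'s result quoted in the paper, namely that for \emph{every} finite extension $T$ of $\HSI$ by $\Nplus$-valid equations there is an odd $n>3$ with $T\not\vdash G_n$. Once you know $\typeof{\cdot}$ of both sides of $G_n$ are isomorphic for all such $n$, completeness of $T$ forces $T\vdash G_n$ for all $n$ and the contradiction is immediate.

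The genuine gap is the step you flag as ``the main obstacle'' and then leave open: establishing that each $G_n$ is realized by a pair of mutually inverse $\lambda$-terms, i.e.\ that the two sides are isomorphic \emph{types} and not merely equal in $\Nplus$. This is not a routine bridging step --- it is the entire content of the result of Fiore, Di Cosmo and Balat \cite{fiore06} that the paper invokes to prove the theorem, and it depends on specific arithmetic facts about Gurevi\v{c}'s polynomials (for instance $A D_n = B_n C_n = B_{2n}$, an identity between polynomials with \emph{positive} coefficients, hence available at the level of types) rather than on any generic construction. Neither of your fallback strategies closes it as stated: strategy (i), ``generalizing the sum-swap construction that realises Wilkie's identity,'' presupposes that you already possess an isomorphism witnessing Wilkie's identity, which is itself the nontrivial base instance of the very problem; and strategy (ii) is circular, since a semantic model in which isomorphism coincides with a theory validating all the $G_n$ would in particular certify $\Nplus\vDash f\equiv g \Rightarrow \typeof{f}\cong\typeof{g}$ on this family, and the paper explicitly records that this implication is not known in general (finite-set semantics only detects cardinality, not uniform isomorphism of types with variables). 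So the skeleton is right, but the load-bearing lemma --- the interpretation of the $G_n$ as isomorphisms --- is missing and cannot be waved through.
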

The notation $\text{ T } \vdash f\doteq g$ means that there is a formal derivation of the equation $f \doteq g$ in the derivation system shown in part~(b) of Figure~\ref{fig:hsi}, from the axioms of the set T; therefore, $\HSI \vdash f\doteq g$ means that the equation is derivable from the finite set of axioms shown in part~(a) of Figure~\ref{fig:hsi} (HSI stands for ``High-School Identities'', see Subsection~\ref{subsec:tarski} below).

Having only Theorem~\ref{fac:hsi:sound} and Theorem~\ref{fac:hsi:incomplete} is surprisingly little if we compare to what is known for the fragments of $\Ecal$ that do not mix $g^f$ and $g+f$ \emph{simultaneously}. For those fragments, we have, as shown in \cite{soloviev81,soloviev83,fiore06}, soundness \emph{and completeness} with respect to the suitable restriction of HSI, we have moreover equivalence with truth in the standard model of positive natural numbers $\Nplus$ (see Subsection~\ref{subsec:tarski} for definition of truth in $\Nplus$),
\[
\typeof{f}\cong\typeof{g} ~\Leftrightarrow~ \Nplus\vDash f\equiv g,
\]
and, consequently, the \emph{decidability} of $\tau\cong\sigma$ for any $\tau$ and $\sigma$ of those fragments. 

In this paper, we will address both the questions of completeness and decidability for $\Ecal$, in simultaneous presence of $g^f$ and $g+f$. In Section~\ref{sec:subclasses}, we will bring up the relevance of certain subclasses of types going back to Levitz, and explain how results of Henson, Rubel, Gurevič, Richardson, and Macintyre, allow to show that type isomorphism for those subclasses are complete with respect to HSI, and decidable. In Section~\ref{sec:hsistar}, using Wilkie's positive solution of Tarski's High-School Algebra Problem (see next subsection), we will establish the same properties for the whole of $\Ecal$ (decidability is proved for base types interpreted as finite sets.). In Section~\ref{sec:conclusion}, we will mention related open problems, in particular about having efficient means of deciding the type isomorphisms.

\subsection{Tarski's High-School Algebra Problem}\label{subsec:tarski}

The questions that we are interested in are related to questions regarding polynomials with exponentiation from the class $\Ecal$, posed by Skolem \cite{skolem56} and Tarski \cite{martin73} in the 1960's. Especially relevant is the question known as Tarski's High-School Algebra Problem\footnote{For various results around this problem, please look at the survey articles \cite{burris93} and \cite{burris04}.}: can all equations that are true in the standard model of positive natural numbers $(\Nplus\vDash f\equiv g)$ be derived inside the derivation system of HSI from parts (a) and (b) of Figure~\ref{fig:hsi} $(\HSI\vdash f\doteq g)$? This is a \emph{completeness} question.

The meaning of $\Nplus\vDash f\equiv g$ is the standard model theoretic one: for any replacement of the variables of $f$ and $g$ by elements of $\Nplus$, one computes the same positive natural number. The converse (\emph{soundness}),
\[
\HSI\vdash f\doteq g ~\Rightarrow~ \Nplus\vDash f\equiv g,
\]
can easily be proved.

\begin{figure*}
    \begin{subfigure}[b]{0.4\textwidth}
        \centering
        \begin{align*}
          f &\doteq f\\
          f+g &\doteq g+f \\
          (f+g)+h &\doteq f+(g+h)\\
          f g &\doteq g f\\
          (f g) h &\doteq f (g h)\\
          f(g+h) &\doteq f g + f h\\
          1 f &\doteq f\\
          f^1&\doteq f\\
          1^f&\doteq 1\\
          f^{g+h}&\doteq f^g f^h\\
          (f g)^h&\doteq f^h g^h\\
          (f^g)^h&\doteq f^{g h}\\
        \end{align*}
        \caption{Axioms of HSI}\label{fig:hsi:a}
    \end{subfigure}
    ~
    \begin{subfigure}[b]{0.5\textwidth}
      \begin{subfigure}[b]{1.0\textwidth}
        \centering
        \begin{align*}
          \frac{f\doteq g}{g\doteq f}&~&
          \frac{f\doteq g\quad g\doteq h}{f\doteq h}&~&
          \frac{f_1\doteq g_1\quad f_2\doteq g_2}{f_1^{f_2}\doteq g_1^{g_2}}
        \end{align*}
        \begin{align*}
          \frac{f_1\doteq g_1\quad f_2\doteq g_2}{f_1+f_2\doteq g_1+g_2}&~&
          \frac{f_1\doteq g_1\quad f_2\doteq g_2}{f_1 f_2\doteq g_1 g_2}
        \end{align*}
        \caption{Equality and congruence rules}\label{fig:hsi:b}
      \end{subfigure}

      \begin{subfigure}[b]{1.0\textwidth}
        \centering
        \begin{align*}
          t_{1} &\doteq 1\\
          t_{x_i} &\doteq x_i\\
          t_{z u} &\doteq t_z t_u\\
          t_{z+u} &\doteq t_z + t_u\\
          t_z &\doteq t_{u} &(\text{when } \mathbb{N}^+\vDash z\equiv u)
        \end{align*}
        \caption{Additional axioms of HSI*}\label{fig:hsi:c}
      \end{subfigure}
    \end{subfigure}
    \begin{center}
          Notation: $f, g, h$ denote polynomials with exponentiation (possibly containing $t_z$ terms in the congruence and equality rules), while $z, u$ denote ordinary polynomials (i.e., without non-constant exponents) with possibly  negative monomial coefficients.
    \end{center}
    \caption{The derivation system of (Extended) High-School Identities}\label{fig:hsi}
\end{figure*}

Martin~\cite{martin73} was the first to show that, if we exclude the axioms mentioning the constant $1$ from HSI, the derivation system is incomplete, since it can not derive the equality
\[
(x^z+x^z)^w(y^w+y^w)^z = (x^w+x^w)^z(y^z+y^z)^w.
\]
Wilkie~\cite{wilkie00} generalized Martin's equality to the whole of HSI, giving the equation
\begin{equation}
  \label{eq:wilkie}
  (A^x+B^x)^y(C^y+D^y)^x = (A^y+B^y)^x(C^x+D^x)^y,
\end{equation}
where $A=1+x, B=1+x+x^2, C=1+x^3, D=1+x^2+x^4$.
He showed (\ref{eq:wilkie}) to be non-derivable in HSI, even though it is true in $\Nplus$. We thus have
\[
\forall f,g\in\Ecal (\HSI\vdash f\doteq g ~\Rightarrow~ \Nplus\vDash f\equiv g),
\]
but
\[
\forall f,g\in\Ecal (\Nplus\vDash f\equiv g ~\not\Rightarrow~ \HSI\vdash f\doteq g),
\]
which constitutes a \emph{negative} solution to Tarski's original question.

Gurevič \cite{gurevic90} further showed that one can not ``repair'' HSI by extending it with \emph{any} finite list of axioms. He generalized Wilkie's (\ref{eq:wilkie}) to the infinite sequence of equations
\begin{equation}\tag{$G_n$}
  \label{eq:gurevic}
  (A^{2^x}+B_n^{2^x})^x(C_n^x+D_n^x)^{2^x} = (A^x+B_n^x)^{2^x}(C_n^{2^x}+D_n^{2^x})^x,
\end{equation}
where $A=x+1, B_n=1+x+x^2+\cdots+x^{n-1}, C_n=1+x^n, D_n=1+x^2+x^4+\cdots+x^{2(n-1)}$, and showed that for any finite extension T of HSI there is an odd $n>3$ such that T can not prove the equality ($G_n$) although $\Nplus\vDash G_n$.

Fiore, Di Cosmo and Balat \cite{fiore06} showed that Gurevič's equations can be interpreted as type isomorphisms, establishing the mentioned Theorem~\ref{fac:hsi:incomplete}.

\subsection{Decidability of Arithmetic Equality for $\Ecal$}

A separate question of more general interest is that of the decidability of equality between polynomials with exponentiation, that is, whether there is a procedure for deciding if $\Nplus\vDash f\equiv g$ holds or not, for any $f,g\in\Ecal$. It was first addressed by Richardson \cite{richardson69}, who proved decidability for the univariate case (expressions of $\Ecal$ in one variable). Later, Macintyre \cite{macintyre81} showed the decidability for the multivariate case i.e. for the whole of $\Ecal$.

\begin{theorem}\label{fac:E:decidable}
There is a recursive procedure that decides, for any $f,g\in\Ecal$, whether $\Nplus\vDash f\equiv g$ holds or not.
\end{theorem}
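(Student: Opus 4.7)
The plan is to reduce arithmetic equality on $\Nplus$ to a finite check of equality on a computable grid of positive integers, via an effective bound on the number of zeros of exponential polynomials. I would interpret $f,g\in\Ecal$ as real-valued analytic functions on $(\mathbb{R}^+)^n$ (reading $g^f$ as $\exp(f\cdot\log g)$) and set $h := f - g$, which lives in the ambient ring of exponential polynomials with integer coefficients but negative monomial coefficients. The statement $\Nplus\vDash f\equiv g$ is then equivalent to $h$ vanishing on $(\Nplus)^n$.

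The main work is a uniform zero-counting lemma: from the syntactic structures of $f$ and $g$ one computes a bound $B=B(f,g)$ such that, for any positive reals $r_2,\dots,r_n$, the univariate slice $x\mapsto h(x,r_2,\dots,r_n)$ either vanishes identically on $\mathbb{R}^+$ or has at most $B$ real zeros, and analogously along each other coordinate. This is essentially the univariate case, whose proof uses Rolle-style/Khovanski\u{\i}-style arguments for exponential polynomials; the crucial point is that the bound depends only on the syntactic shape of $h$, not on the particular real values substituted. Granted this lemma, I would show by induction on $n$ that $h$ vanishes on $\{1,\dots,B+1\}^n$ iff $h$ vanishes identically on $(\mathbb{R}^+)^n$: fixing the last $n-1$ coordinates to any grid point and applying the univariate lemma forces $h$ to vanish on the entire first-coordinate line through that point; iterating this promotion from grid to line to plane and so on yields vanishing on all of $(\mathbb{R}^+)^n$. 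Consequently, to decide $\Nplus\vDash f\equiv g$ it suffices to compute $B$ and verify $f(\vec a)=g(\vec a)$ for every $\vec a\in\{1,\dots,B+1\}^n$, which is a finite arithmetic computation over $\Nplus$.

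The main obstacle is extracting an \emph{effective}, syntactically computable bound $B$ and arranging its uniformity across substitutions. The univariate version is essentially Richardson's theorem and relies on iterated Rolle and Pfaffian-chain arguments, which already require delicate analytic work to turn into bounds that can be read off from an expression. Extending these bounds so that the multivariate induction goes through with a single $B$ independent of which variables have been fixed, and to what values, is considerably more subtle and corresponds to Macintyre's contribution cited just before the statement; the historical gap of more than a decade between the univariate and multivariate results is a good indication of where the real difficulty lies.
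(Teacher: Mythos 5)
The paper does not prove this theorem at all: it is imported verbatim from the literature, with Richardson credited for the univariate case and Macintyre for the multivariate one. Your outline does match the shape of the cited procedure --- the paper itself, in its concluding section, describes Macintyre's algorithm as ``computing, for a given equality one wants to test, an upper bound for the values of the variables, and then [doing] a brute-force search up to that upper bound'' --- and your grid-to-line-to-hyperplane induction is a correct way to organize the reduction, since vanishing on the grid forces vanishing on $(\mathbb{R}^+)^n$ and hence on $(\Nplus)^n$, while the converse inclusion of the grid in $(\Nplus)^n$ is trivial.

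As a standalone proof, however, your proposal has exactly the gap you yourself identify: everything rests on the uniform zero-counting lemma --- an \emph{effectively computable} bound $B$, read off from the syntax of $f$ and $g$ alone, on the number of zeros of every univariate slice of $h=f-g$, uniformly over all positive real values of the frozen coordinates --- and you do not establish it. This lemma is not a routine Rolle argument; it is the substance of Richardson's and Macintyre's papers (and is closely tied to Khovanski\u{\i}-type fewnomial bounds for Pfaffian functions), and the uniformity across arbitrary real specializations is precisely what makes the multivariate case hard. So your submission is an accurate and well-organized reduction of the theorem to the one lemma that constitutes its entire difficulty, which is a reasonable account of the known proof's architecture but not a proof. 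Given that the paper uses the theorem as a black box, the honest course is to do the same and cite it, rather than to present the reduction as if only a technical verification remained.
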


However, we cannot use the decidability result for $\Nplus$ to conclude decidability of type isomorphisms for $\Ecal$, because, although we do have that (by Theorem~\ref{fac:hsi:sound})
\[
  \HSI\vdash f\doteq g ~\Rightarrow~ \typeof{f}\cong\typeof{g} ~\Rightarrow~ \Nplus\vDash f\equiv g,
\]
a proof of
\[
  \typeof{f}\cong\typeof{g} ~\Leftarrow~ \Nplus\vDash f\equiv g
\]
is not known, and HSI is not complete:
\[
  \HSI\vdash f\doteq g~\not\Leftarrow~ \Nplus\vDash f\equiv g.
\]

\section{Subclasses of $\Ecal$ Complete for HSI}\label{sec:subclasses}

One of the things that has not been exploited in the literature on type isomorphism is the line of research on \emph{subclasses} of polynomials with exponentiation for which the axioms of HSI \emph{are} complete.

In \cite{levitz75}, while studying the relation of eventual dominance for polynomials with exponentiation, Levitz isolated the class of expressions in one variable, built by the inductive definition
\[
\mathcal{S} \ni f,g ::= 1 ~|~ x ~|~ f+g ~|~ fg ~|~ x^f ~|~ n^f,
\]
where $n$ is a numeral. Henson and Rubel \cite{henson84} extended it to the multivariate class defined by
\[
\mathcal{L}(S) \ni f,g ::= s ~|~ x_i ~|~ f+g ~|~ fg ~|~ {s'}^f ~|~ x_i^f ~|~ (x_i^{s'})^f,
\]
where $S$ is an arbitrary set of positive real constants, $s,s'\in S$, $s'>1$, and they proved all true equalities between expressions from $\mathcal{L}(S)$ to be derivable from HSI. They also conjectured that the result could be extended to the class defined by
\[
\mathcal{R}(S) \ni f,g ::= s ~|~ x_i ~|~ f+g ~|~ fg ~|~ p^f,
\]
where $p$ is an ordinary polynomial with coefficients in $S$, and they remarked that Wilkie's counterexample lies ``just outside'' the class $\mathcal{R}(S)$.

Finally, Gurevič \cite{gurevic93} showed that HSI are complete for the proper extension $\mathcal{L}$ of $\mathcal{R}(S)$ defined by\footnote{We stick to the original notations $\mathcal{L}(S)$ and $\mathcal{L}$, although the latter also depends on the set $S$ and we have $\mathcal{L}(S)\subsetneq \mathcal{R}(S)\subsetneq \mathcal{L}$.}
\[
\mathcal{L} \ni f,g ::= s ~|~ x_i ~|~ f+g ~|~ f g ~|~ l^f,
\]
where $l\in\Lambda$ is defined by
\[
\Lambda \ni f,g ::= s ~|~ x_i ~|~ f+g ~|~ f g ~|~ l_0^f,
\]
and $l_0\in \Lambda$ has no variables.

\begin{theorem}\label{fac:L:complete}
  For all $f,g\in\mathcal{L}$,
  \[\Nplus \vDash f\equiv g ~\Rightarrow~ \HSI\vdash f\doteq g.\]
\end{theorem}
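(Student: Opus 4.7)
The plan is to extend the Henson--Rubel completeness result for $\mathcal{L}(S)$ to the strictly larger class $\mathcal{L}$. The strategy is to reduce every equation in $\mathcal{L}$ to a normal form whose uniqueness (up to HSI-provable rearrangement) can be settled either by Henson--Rubel or by a direct dominance argument. First I would use the axioms of HSI---distributivity, commutativity, and the exponent laws $f^{g+h}=f^g f^h$, $(fg)^h=f^h g^h$, $(f^g)^h=f^{gh}$---to rewrite every $f\in\mathcal{L}$ as a sum of monomials, each a product of constants, variables, and factors $l^m$ with $l\in\Lambda$ and $m\in\mathcal{L}$. The crucial structural fact is that $\Lambda$ forbids variable-containing bases inside its own exponentials, so the ``depth'' of nested exponentiation is a well-founded complexity measure on which to recurse.

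Second, I would prove uniqueness of the normal form using eventual-dominance arguments in the style of Levitz: by letting a variable $x_i$ tend to infinity with the others fixed, the asymptotically dominant monomial of an $\mathcal{L}$-expression is pinned down by the arithmetic function it computes on $\Nplus$, so if two normal forms agree everywhere on $\Nplus$ their leading monomials must match. Once matched, they can be peeled off by HSI and the recursion continues on the remainder. Inside a single monomial, matching the exponential factors $l^m$ against ${l'}^{m'}$ reduces---by induction on $\Lambda$-depth---either to a sub-problem in $\mathcal{L}$ of lower rank or, at the leaves where bases are constants, variables, or simple monomials, to an instance of the Henson--Rubel theorem for $\mathcal{L}(S)$, which can be invoked as a black box.

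The main obstacle I expect is uniqueness of the decomposition of a product of exponentials $\prod_i l_i^{m_i}$: HSI-provable identities like $(u^a)^b = (u^b)^a$ already show that the multiset of pairs $(l_i,m_i)$ is not canonical, and the Wilkie/Gurevi\v{c} equations demonstrate that in the unrestricted setting $\Ecal$ there exist $\Nplus$-equal products with genuinely different factorizations. The heart of the proof must therefore be a structural lemma showing that the restriction imposed by $\Lambda$---no variable inside the base of an exponential base---rules out any Wilkie-type obstruction, for instance by arguing that the leading asymptotic behavior of each base $l_i$ determines its exponent $m_i$ uniquely up to HSI. Soundness in the converse direction is already delivered by Theorem~\ref{fac:hsi:sound}, so only the completeness implication stated in the theorem needs to be established.
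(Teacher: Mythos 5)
First, note that the paper does not prove this statement at all: Theorem~\ref{fac:L:complete} is imported verbatim as Gurevi\v{c}'s completeness theorem for the class $\mathcal{L}$ (the citation \cite{gurevic93}), building on Henson--Rubel for $\mathcal{L}(S)$. So your proposal is not an alternative to a proof in the paper; it is an attempt to reprove a deep external result, and it should be judged on whether the sketch could be completed.

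It cannot, as written, because the one step you explicitly defer --- the ``structural lemma'' ruling out Wilkie-type obstructions for products $\prod_i l_i^{m_i}$ with $l_i\in\Lambda$ --- is the entire content of the theorem, and asymptotic dominance is too coarse a tool to supply it. The Wilkie/Gurevi\v{c} phenomenon is driven by the factorization of positive polynomials into irreducibles that need not be positive (e.g.\ $1+x^2+x^4=(1+x+x^2)(1-x+x^2)$), so two $\Nplus$-equal products of exponentials can have bases that are multiplicatively entangled in ways invisible to leading-term analysis: $A^xB^x$ and $(AB)^x$ already share all asymptotics, and the question is precisely which such regroupings are forced by truth in $\Nplus$ and which of those are HSI-derivable. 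Henson--Rubel settle this for $\mathcal{L}(S)$ using Nevanlinna value-distribution theory (a Borel-type unique-factorization theorem for exponential sums over $\mathbb{C}$), and Gurevi\v{c}'s extension to $\mathcal{L}$ likewise rests on nontrivial analytic input; ``the leading asymptotic behavior of each base determines its exponent'' is false without such machinery, since distinct bases can have identical leading behavior. A secondary issue: in several variables there is no single ``asymptotically dominant monomial'' (different monomials dominate along different directions), so even the peeling step needs either a diagonal substitution argument or the full multivariate dominance theory, which is itself a substantial theorem. In short, your outline correctly locates the difficulty but does not overcome it, and the elementary route you propose is known to be insufficient --- which is presumably why the paper cites the result rather than proving it.
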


For our purposes, it suffices to take $S=\{1\}$, and we will henceforth use $\mathcal{L}$ specialized to this $S$.

\begin{example}\label{example2}
Wilkie's equation~(\ref{eq:wilkie}) deals with terms that do not belong to the class $\mathcal{L}$. Although $A, B, C, D \in \Lambda\subset\mathcal{L}$, and hence $A^x, B^x, C^x, D^x, A^x+B^x, C^x+D^x \in \mathcal{L}\setminus\Lambda$, we have $(A^x+B^x)^y, (C^x+D^x)^y \notin\mathcal{L}$, because bases of exponentiation are not allowed to contain bases of exponentiation that contain variables.
\end{example}

\begin{example} The term\footnote{These terms correspond to simply typed versions of an induction axiom for decidable predicates, in ``curried'' and ``uncurried'' variant.}
  \begin{equation}
    \label{eq:uncurried}
    (y+z)^{x(y+z)(y+z)^{x(y+z)}}\in\mathcal{L},
  \end{equation}
but
  \begin{equation}
    \label{eq:curried}
    \left({\left(\left(y+z\right)^x\right)}^{((y+z)^{y+z})^x}\right)^{y+z}\notin\mathcal{L},
  \end{equation}
although the two terms are inter-derivable using the HSI axioms. This means that, even though HSI is complete for $\mathcal{L}$, there is room for extension of $\mathcal{L}$ to subclasses that are still finitely axiomatizable by HSI. In other words, $\mathcal{L}$ is not closed under HSI-derivability.
\end{example}

Theorems \ref{fac:hsi:sound}, \ref{fac:E:decidable} and \ref{fac:L:complete} allow us to conclude the following.
\begin{corollary}[Completeness for $\mathcal{L}$] For all $f,g\in\mathcal{L}$, if $\typeof{f}\cong\typeof{g}$ then $\HSI\vdash f\doteq g$.
\end{corollary}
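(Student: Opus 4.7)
The plan is to chain the soundness half of Theorem~\ref{fac:hsi:sound} with Gurevič's completeness result stated as Theorem~\ref{fac:L:complete}. Fix $f,g\in\mathcal{L}$ and suppose $\typeof{f}\cong\typeof{g}$. Since $\mathcal{L}\subseteq\Ecal$, Theorem~\ref{fac:hsi:sound} applies, and its second implication $\typeof{f}\cong\typeof{g} \Rightarrow \Nplus\vDash f\equiv g$ produces the arithmetic identity $\Nplus\vDash f\equiv g$ directly from the existence of a mutually inverse pair of lambda terms. This step is purely model-theoretic and imposes no constraint on the shape of $f$ and $g$ beyond their membership in $\Ecal$.

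The hypothesis $f,g\in\mathcal{L}$ only becomes essential in the second step. I would apply Theorem~\ref{fac:L:complete} to the conclusion $\Nplus\vDash f\equiv g$ of the previous step and obtain $\HSI\vdash f\doteq g$, which is exactly what we want. This step would be unavailable for arbitrary $f,g\in\Ecal$: Wilkie's and Gurevič's equations yield pairs $(f,g)\in\Ecal\setminus\mathcal{L}$ for which $\Nplus\vDash f\equiv g$ holds yet $\HSI\not\vdash f\doteq g$, and the syntactic restriction that every exponential base lie in the class $\Lambda$ (whose exponents themselves contain no variables) is precisely what excludes those bad cases, as the analysis in Example~\ref{example2} illustrates.

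There is in this sense no hard obstacle internal to the corollary: all the genuine difficulty has been absorbed externally into Theorem~\ref{fac:L:complete}, whose proof by Gurevič uses delicate analytic and combinatorial arguments about polynomials with exponentiation in $\mathcal{L}$. The corollary itself merely bridges the syntactic notion of type isomorphism to the arithmetic equality $\Nplus\vDash f\equiv g$ via soundness, and then feeds the result into Gurevič's theorem. The one sanity check worth performing is that $\mathcal{L}$, once specialized to $S=\{1\}$ as stipulated just before the corollary, is a genuine sub-language of $\Ecal$, so that both Theorem~\ref{fac:hsi:sound} and Theorem~\ref{fac:L:complete} apply to the very same expressions $f$ and $g$; this is immediate from comparing the two grammars.
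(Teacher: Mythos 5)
Your proof is correct and follows exactly the route the paper intends: the second implication of Theorem~\ref{fac:hsi:sound} takes $\typeof{f}\cong\typeof{g}$ to $\Nplus\vDash f\equiv g$, and Theorem~\ref{fac:L:complete} (Gurevi\v{c}'s completeness of HSI for $\mathcal{L}$) then yields $\HSI\vdash f\doteq g$. Nothing further is needed.
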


\begin{corollary}[Decidability for $\mathcal{L}$] There is an algorithm that decides, for all $f,g\in\mathcal{L}$,  whether $\typeof{f}\cong\typeof{g}$ or not.
\end{corollary}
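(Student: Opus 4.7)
The plan is to reduce the isomorphism question on $\mathcal{L}$ to the decidability of arithmetic equality in $\Nplus$, which is provided by Theorem~\ref{fac:E:decidable}. The key observation is that, on the subclass $\mathcal{L}$, the three a priori distinct notions---HSI-derivability, type isomorphism, and truth in $\Nplus$---collapse into a single equivalence class, so deciding any one of them decides all three.

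Concretely, I would chain together the results already established. For any $f, g \in \mathcal{L}$, soundness (Theorem~\ref{fac:hsi:sound}) gives the forward chain
\[
\HSI\vdash f\doteq g ~\Rightarrow~ \typeof{f}\cong\typeof{g} ~\Rightarrow~ \Nplus \vDash f \equiv g,
\]
while completeness of HSI on $\mathcal{L}$ (Theorem~\ref{fac:L:complete}) closes the loop by giving $\Nplus \vDash f \equiv g \Rightarrow \HSI \vdash f \doteq g$. Hence for $f, g \in \mathcal{L}$ the three statements are equivalent, and in particular
\[
\typeof{f}\cong\typeof{g} ~\Longleftrightarrow~ \Nplus \vDash f \equiv g.
\]

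The algorithm then consists of two steps. First, since membership in $\mathcal{L}$ is defined by a straightforward inductive grammar, one can mechanically verify that the given $f, g$ belong to $\mathcal{L}$ (and, if needed, recognize the nested base-of-exponentiation shape $l \in \Lambda$ with no variables). Second, invoke the Richardson--Macintyre decision procedure of Theorem~\ref{fac:E:decidable} to decide whether $\Nplus \vDash f \equiv g$; output ``isomorphic'' if yes and ``not isomorphic'' if no. Correctness is immediate from the equivalence above.

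There is essentially no obstacle in this proof beyond assembling the pieces; the substance of the work lives entirely inside the cited results (Gurevič's completeness for $\mathcal{L}$ and Macintyre's decidability for $\Ecal$). The only thing worth being explicit about is that the decision procedure of Theorem~\ref{fac:E:decidable} applies a fortiori to the subclass $\mathcal{L} \subseteq \Ecal$, so no new algorithmic content is needed---the contribution is solely the observation that these existing results combine to settle type isomorphism on $\mathcal{L}$.
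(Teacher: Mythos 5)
Your proposal is correct and follows exactly the route the paper intends: the corollary is stated immediately after the remark that Theorems~\ref{fac:hsi:sound}, \ref{fac:E:decidable} and~\ref{fac:L:complete} combine to give it, i.e.\ soundness and Gurevi\v{c}'s completeness for $\mathcal{L}$ collapse $\typeof{f}\cong\typeof{g}$ into $\Nplus\vDash f\equiv g$, which Macintyre's procedure decides. Your additional remarks about checking membership in $\mathcal{L}$ and restricting the decision procedure to the subclass are harmless elaborations of the same argument.
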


\section{The Extended High-School Identities}\label{sec:hsistar}

As explained in Subsection~\ref{subsec:tarski}, Wilkie's negative solution of Tarski's problem, together with the generalization of Gurevič, was used by Fiore, Di Cosmo, and Balat, to show the incompleteness of HSI, and the impossibility of a finite axiomatization, for type isomorphism over $\Ecal$.

However, in the paper~\cite{wilkie00}, a \emph{positive} solution to Tarski's problem was also given. Namely, Wilkie showed that HSI is almost complete: by extending it with all equations that hold between ordinary positive polynomials\footnote{A polynomial is positive if it computes to a positive natural number for every replacement of variables by positive natural numbers.} --- that is, positive polynomials without exponents containing variables, but possibly with negative monomial coefficients --- one obtains an axiomatization (HSI* from Figure~\ref{fig:hsi}) which is complete for all true equations in $\mathbb{N}^+$ between expressions of $\Ecal$. Since equality of ordinary (positive) polynomials is decidable, we have a recursive procedure for determining if an equation belongs to the set of axioms, and therefore equality between terms of $\Ecal$ --- although not finitely axiomatizable --- is recursively axiomatizable.

To be more precise, let $\Estar$ be the language extending $\Ecal$ with a constant $t_z$ for every ordinary positive polynomial $z$:
\[
\Estar \ni f,g ::= t_z ~|~ 1 ~|~ x_i ~|~ g^f ~|~ f g ~|~ f+g.
\]
The system of axioms of HSI* is the extension of the system HSI --- that applies only to expressions of the original language $\Ecal$ --- with the axioms given in part~(c) of Figure~\ref{fig:hsi} --- that apply to a strict subset of the extended language $\Estar$. The two types of axioms can ``interact'' through the derivation system of part~(b) of Figure~\ref{fig:hsi}, thus making possible equalities between expressions of the full language $\Estar$.

The left-hand side of the new axioms is always an expression of form $t_z$, while the right hand side can either be an expression $t_u$ --- whenever $z$ and $u$ are equal polynomials (which can be decided by bringing them into canonical form) --- or an expression reflecting the structure of $z$ when possible: when $z$ has no negative coefficients (we use in that case letters $p,q$ instead of $z,u$), we can fully reflect it into the language, that is, one can prove $\HSIstar\vdash t_p \doteq p$. This representation of the axioms is inspired from the one of Asatryan \cite{asatryan08}. Note that, for any $t_z$, there are infinitely many axioms having $t_z$ on the left hand side that can be used.

We can now state Wilkie's result.

\begin{theorem}[Completeness of HSI*]\label{thm:wilkie} For all $f,g\in\Ecal$ (that is, all $f,g$ of $\Estar$ that do \textbf{not} contain $t_z$-symbols), we have that $\mathbb{N}^+ \vDash f\equiv g$ implies $\HSIstar \vdash f\doteq g$.
\end{theorem}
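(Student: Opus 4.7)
The plan is to proceed by induction on a suitable complexity measure of $f$ and $g$ (most conveniently, the maximum depth of nesting of variable exponents appearing in the two expressions), reducing every term of $\Ecal$ to a canonical form using the HSI axioms together with the new polynomial axioms of part~(c) of Figure~\ref{fig:hsi}. At the base of the induction (no nested variable exponents at all), both $f$ and $g$ are ordinary positive polynomials with natural-number coefficients, and arithmetic equality between them is delivered directly by the axioms $t_p \doteq p$ combined with $t_z \doteq t_u$ whenever $\Nplus \vDash z \equiv u$; this is exactly where the new part of HSI* does its work.

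For the inductive step, I would first use the distributivity and exponential laws of HSI --- in particular $f(g+h) \doteq fg+fh$, $(fg)^h \doteq f^h g^h$, $f^{g+h} \doteq f^g f^h$, and $(f^g)^h \doteq f^{gh}$ --- to push each expression into a ``product of powers with polynomial coefficient'' shape
\[
  f \doteq t_{p_0}\cdot \prod_{i=1}^{k} b_i^{e_i},
\]
where each base $b_i$ and each exponent $e_i$ has strictly smaller depth than $f$, and $t_{p_0}$ collects the residual ordinary-polynomial part. Next, factors with the same base are merged via $b^e b^{e'} \doteq b^{e+e'}$, so one may assume the bases $b_i$ are pairwise arithmetically distinct. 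The analogous normal form is obtained for $g$, say $g \doteq t_{q_0}\cdot\prod_{j=1}^{l} c_j^{d_j}$.

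The heart of the argument, and the place I expect the main difficulty, is a \emph{rigidity lemma}: if two such normal forms are equal in $\Nplus$, then $k=l$ and, after reindexing, $\Nplus\vDash b_i\equiv c_i$ and $\Nplus\vDash e_i\equiv d_i$ for each $i$, while $\Nplus\vDash p_0\equiv q_0$. This is proved by a growth-rate argument: along sequences of variables tending to infinity in a carefully chosen order, the ``fastest-growing'' pair $b_i^{e_i}$ dominates every other factor, forcing a unique match; one then cancels and iterates. Arranging the induction so that the isolated sub-bases $b_i$ and sub-exponents $e_i$ genuinely have smaller depth (and not merely smaller size) is the subtle point, since variable exponents may themselves contain exponential subterms of the same nominal complexity as $f$ --- this is precisely the phenomenon exhibited by Wilkie's counterexample~(\ref{eq:wilkie}).

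Once the rigidity lemma is available, the inductive hypothesis applied to each matched pair $(b_i, c_i)$ and $(e_i, d_i)$ yields HSI*-derivations $b_i \doteq c_i$ and $e_i \doteq d_i$, while $t_{p_0} \doteq t_{q_0}$ follows from a single instance of the new axiom $t_z \doteq t_u$ applied to $z=p_0,\ u=q_0$. The congruence rules of Figure~\ref{fig:hsi}(b) then lift all of this to $\HSIstar \vdash f \doteq g$, completing the induction. In effect, the new axioms of part~(c) are designed to absorb exactly the polynomial identities (possibly with negative intermediate coefficients, as in Wilkie's $A,B,C,D$) that HSI cannot prove but which are needed to identify bases arising as sub-polynomials of exponential expressions.
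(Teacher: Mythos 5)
First, a point of orientation: the paper does not prove this theorem at all --- it is Wilkie's result, imported from \cite{wilkie00}, and the paper only sketches the structure of Wilkie's argument in its concluding section. So your proposal should be compared against that sketch, and against the constraints the paper itself records.

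Measured that way, there is a genuine gap: your \emph{rigidity lemma} is false, and Wilkie's own equation~(\ref{eq:wilkie}), quoted in the paper, is a counterexample to it. Both sides of~(\ref{eq:wilkie}) are already in your normal form ``product of powers with pairwise distinct bases'': the left side has factors $(A^x+B^x)^y$ and $(C^y+D^y)^x$, the right side has $(A^y+B^y)^x$ and $(C^x+D^x)^y$, the two sides are equal in $\Nplus$, and yet no reindexing matches the bases arithmetically (e.g.\ $A^x+B^x$ and $C^x+D^x$ already differ at $x=2$, and $A^x+B^x$ and $A^y+B^y$ differ at $x=1,y=2$). The identity holds for the ``hidden'' reason $AD=BC$, which can only be exploited by rewriting the bases using rational functions or polynomials with negative intermediate coefficients --- exactly the identities HSI cannot reach and the reason the factor-matching strategy, which does work for the Levitz--Henson--Rubel--Gurevi\v{c} subclasses via growth-rate/dominance arguments (Theorem~\ref{fac:L:complete}), breaks down on all of $\Ecal$. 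Your remark that the difficulty is merely arranging the induction measure (depth versus size) misdiagnoses the obstruction: even with a well-founded measure, the matching step itself fails. Wilkie's actual proof avoids factor matching entirely: his Theorem~2.8 shows every $f\in\Ecal$ is HSI*-provably equal to a positive \emph{polynomial} (with positive coefficients) in extra variables that are instantiated by witnessing terms $p_i^{q_i}$ with $p_i$ irreducible --- obtained via an enumeration of $\langle$irreducible, monomial$\rangle$ factorizations --- and the uniqueness of this representation is then established by a separate argument from Differential Algebra, not by dominance of a fastest-growing factor. The $t_z$-axioms are consumed in reconciling these generalized polynomial representatives, where negative monomial coefficients genuinely arise, rather than in a single application $t_{p_0}\doteq t_{q_0}$ to a residual constant factor as in your sketch.
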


This is a statement concerning terms of $\Ecal$ (the original language), in the proof of which terms of $\Estar$ (the extended language) are used. In this respect, it is reminiscent of meta-mathematical statements like Hilbert's $\epsilon$-elimination theorems \cite{hilbert01} or Henkin's version of Gödel's completeness theorem \cite{henkin49}. 

Using theorems~\ref{fac:hsi:sound} and~\ref{thm:wilkie}, we immediately obtain the completeness of $\HSIstar$ for type isomorphism over $\Ecal$.

\begin{corollary}\label{cor:hsistar:complete} Given $f, g\in\Ecal$ such that $\typeof{f}\cong\typeof{g}$, we have that $\HSIstar\vdash f\doteq g$.
\end{corollary}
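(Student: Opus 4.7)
The plan is to obtain the corollary as a direct composition of two results already available in the excerpt: the ``semantic half'' of Theorem~\ref{fac:hsi:sound} (which descends isomorphism to arithmetic equality in $\Nplus$) and Theorem~\ref{thm:wilkie} (which axiomatizes arithmetic equality in $\Nplus$ by $\HSIstar$). In other words, we pass through $\Nplus$ as an intermediate ``model'' that bridges the semantic statement $\typeof{f}\cong\typeof{g}$ and the syntactic statement $\HSIstar\vdash f\doteq g$.

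Concretely, assuming $f,g\in\Ecal$ with $\typeof{f}\cong\typeof{g}$, I would first invoke the implication $\typeof{f}\cong\typeof{g}\Rightarrow\Nplus\vDash f\equiv g$ already recorded inside Theorem~\ref{fac:hsi:sound}: any isomorphism between $\typeof{f}$ and $\typeof{g}$ remains an isomorphism under every instantiation of the type variables to finite positive sets, and since isomorphic finite sets have equal cardinality while the constructors $+$, $\times$, $\to$ compute cardinality as $+$, $\cdot$, exponentiation, the corresponding arithmetic identity in $\Nplus$ follows. Second, since by hypothesis $f$ and $g$ are in $\Ecal$ and contain no $t_z$-constants, Wilkie's completeness theorem (Theorem~\ref{thm:wilkie}) applies verbatim and yields $\HSIstar\vdash f\doteq g$.

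There is essentially no obstacle to overcome in the corollary itself: the whole content is packed into the two cited theorems, and the contribution here is to observe that their composition yields a recursive, complete axiomatization of type isomorphism for the full class $\Ecal$. The only point worth flagging is that Theorem~\ref{thm:wilkie} is stated precisely for terms of $\Ecal$ (rather than the extended language $\Estar$), so the hypothesis $f,g\in\Ecal$ is exactly what is needed for the second step to fire. The genuinely hard work lies outside this proof --- in Wilkie's positive solution of Tarski's High-School Algebra Problem, which is what makes $\HSIstar$ strong enough to close the gap left by the incompleteness of plain $\HSI$.
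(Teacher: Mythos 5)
Your proof is correct and matches the paper's own argument exactly: the paper likewise obtains the corollary immediately by chaining the implication $\typeof{f}\cong\typeof{g}\Rightarrow\Nplus\vDash f\equiv g$ from Theorem~\ref{fac:hsi:sound} with Wilkie's completeness result, Theorem~\ref{thm:wilkie}. Your remark that the hypothesis $f,g\in\Ecal$ (no $t_z$-symbols) is precisely what makes Theorem~\ref{thm:wilkie} applicable is also the right point to flag.
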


We now move on to the decidability question. We will show that derivations of HSI* can be interpreted as type isomorphisms, which suffices, since then the circuit 
\[
\typeof{f}\cong\typeof{g} ~\Rightarrow~ \Nplus\vDash f\equiv g~\Rightarrow~ \HSIstar\vdash f\doteq g ~\Rightarrow~ \typeof{f}\cong\typeof{g}
\]
allows one to use Macintyre's decidability results for $\Nplus$ (Theorem~\ref{fac:E:decidable}) to conclude decidability of type isomorphism over $\Ecal$.

At first thought, interpreting the new $t_z$ symbols might seem problematic, since negative monomial coefficients in $z$ would imply the use of some kind of negative types. However, we also have the additional property that $z$ is positive, which means that if we instantiate its variables with positive natural numbers (positive types), we will obtain a positive natural number (positive type). 

In this paper, we will work with the restriction that base types are finite sets. Although the method does work for base types isomorphic to ordinals in Cantor normal form\footnote{Subtraction $\alpha-\beta$ between two such ordinals can be defined when we know that $\alpha<\beta$. Since we can always rewrite $t_z$ as $t_p-t_q$, and we know $p<q$, an ordinal in Cantor normal form can always be computed for $t_z$ whenever ones for $p$ and $q$ are given.}, that requires a careful constructive treatment of ordinals beyond the scope of this paper.

For HSI, one can keep the interpretation of base types implicit: soundness of HSI equations as type isomorphisms is proved uniformly, regardless of the actual interpretations of base types. For HSI*, we will need to be explicit about interpretation, that is, we will prove the soundness theorem \emph{point-wise}. We will thus introduce an explicit environment $\rho$ mapping variables to types and extend the interpretation $\typeof{\cdot}$ for the extra $t_z$-terms.
\begin{align*}
  \inter{1}{\rho} &= \mathbf{1}\\
  \inter{x_i}{\rho} &= \rho(x_i)\\
  \inter{g^f}{\rho} &= \inter{f}{\rho} \to \inter{g}{\rho}\\
  \inter{f g}{\rho} &= \inter{f}{\rho} \times \inter{g}{\rho}\\
  \inter{f+g}{\rho} &= \inter{f}{\rho} + \inter{g}{\rho}\\
  \inter{t_z}{\rho} &= \underbrace{1+1+\cdots+1}_{k\text{-times}} = \mathbf{k} \quad \text{ where } k=\eval(t_z,\rho)
\end{align*}
The number $\eval(t_z,\rho)$ is the result of evaluating $z$ for the variables interpreted in $\rho$ by positive natural numbers. We also denote the type 
\[
\underbrace{1+1+\cdots+1}_{k\text{-times}}
\]
with bold-face $\mathbf{k}$.

\begin{figure*}
    \begin{subfigure}[b]{1.0\textwidth}
        \centering
        \[
        \Lambda^+ \ni M,N,P ::= x ~|~ \star ~|~ \lambda x. M ~|~ M N ~|~ \inl{M}  ~|~ \inr{M} ~|~ \case{M}{x}{N}{x}{P} ~|~ \pair{M}{N} ~|~ \fst{M} ~|~ \snd{M}
        \]
        \caption{Raw language of lambda terms}
    \end{subfigure}

    \begin{subfigure}[b]{1.0\textwidth}
        \centering
        \begin{align*}
          \frac{x^\tau \in \Gamma}{\Gamma \Vdash x^\tau} & & \frac{}{\Gamma \Vdash \star^{\mathbf{1}}} & & \frac{\Gamma \cup x^\tau \Vdash M^\sigma}{\Gamma\Vdash (\lambda x. M)^{\tau\to\sigma}} & & \frac{\Gamma\Vdash M^{\tau\to\sigma}\quad \Gamma\Vdash N^\tau}{\Gamma\Vdash (M N)^\sigma}
        \end{align*}
        \begin{align*}
          \frac{\Gamma\Vdash M^\tau}{\Gamma\Vdash(\inl{M})^{\tau + \sigma}} & & \frac{\Gamma\Vdash M^\sigma}{\Gamma\Vdash (\inr{M})^{\tau + \sigma}} & & \frac{\Gamma\Vdash M^{\tau+\sigma}\quad \Gamma\cup x^\tau\Vdash N^\rho \quad \Gamma\cup x^\sigma\Vdash P^\rho}{\Gamma \Vdash \case{M}{x}{N}{x}{P}^\rho}
        \end{align*}
        \begin{align*}
          \frac{\Gamma\Vdash M^\sigma\quad \Gamma\Vdash N^\tau}{\Gamma\Vdash \pair{M}{N}^{\sigma\times\tau}} & & \frac{\Gamma\Vdash M^{\sigma\times\tau}}{\Gamma\Vdash(\fst{M})^\sigma} & & \frac{\Gamma\Vdash M^{\sigma\times\tau}}{\Gamma\Vdash(\snd{M})^\tau}
        \end{align*}
        \caption{Typing system (well-formed lambda terms)}
    \end{subfigure}
    \caption{Inhabitation of types of $\Ecal$ with lambda terms}\label{fig:terms:typing}
\end{figure*}

\begin{figure*}
    \begin{subfigure}[b]{1.0\textwidth}
        \centering
        \begin{align*}
           M &=_{\beta\eta} \star & \text{ for any term } M \text{ of type } \mathbf{1}\\
           (\lambda x. M) N &=_{\beta\eta} M\{N/x\} & \text{ where $x$ is not a variable of $N$ }\\
           M &=_{\beta\eta} \lambda x. M x & \text{ where $x$ is not a variable of $M$ } \\
           \case{\inl{M}}{x}{N}{y}{P} &=_{\beta\eta} N\{M/x\} & \text{ where $x$ is not a variable of $M$ } \\
           \case{\inr{M}}{x}{N}{y}{P} &=_{\beta\eta} P\{M/y\} & \text{ where $y$ is not a variable of $M$ } \\
           N\{M/z\}&=_{\beta\eta} \case{M}{x}{N\{\inl{x}/z\}}{y}{N\{\inr{y}/z\}} & \text{ where $z$ is not a variable of $M$ } \\
           \fst{\pair{M}{N}} &=_{\beta\eta} M\\
           \snd{\pair{M}{N}} &=_{\beta\eta} N\\
           M &=_{\beta\eta} \pair{\fst{M}}{\snd{M}}\\
        \end{align*}
    \end{subfigure}
    \begin{center}
      The full relation $=_{\beta\eta}$ is the reflexive, symmetric, transitive, and congruent closure of the above.
    \end{center}
    \caption{Beta-eta equality between lambda terms of the same type}\label{fig:terms:equality}
\end{figure*}

\begin{theorem}\label{thm:hsistar:sound} Let $f, g\in \Estar$.
  If $\HSIstar \vdash f\doteq g$ then $\inter{f}{\rho}\cong\inter{g}{\rho}$ for any $\rho$ that interprets variables by types of form $\mathbf{k}$.
\end{theorem}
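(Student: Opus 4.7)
The plan is to proceed by induction on the derivation $\HSIstar\vdash f\doteq g$, treating three kinds of steps separately: the original HSI axioms (Figure~\ref{fig:hsi:a}), the equality/congruence rules (Figure~\ref{fig:hsi:b}), and the new $t_z$-axioms (Figure~\ref{fig:hsi:c}). The equality and congruence rules are handled immediately, because type isomorphism $\cong$ is reflexive, symmetric, transitive, and a congruence with respect to $+$, $\times$, and $\to$; these are standard facts constructed by composing inverse lambda terms. For the HSI axioms themselves, I would appeal to Theorem~\ref{fac:hsi:sound}, noting that the lambda-term isomorphisms witnessing each HSI axiom are polymorphic in the types substituted for $f,g,h$, so they remain valid when some of those metavariables are instantiated by expressions of $\Estar$ containing $t_z$ subterms (whose interpretation under $\rho$ is simply another type, namely some $\mathbf{k}$).

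The substantive work is the soundness of each of the five new axiom schemas from Figure~\ref{fig:hsi:c} under the point-wise interpretation $\inter{\cdot}{\rho}$. By the definition of $\inter{t_z}{\rho}$, one has $\inter{t_z}{\rho}=\mathbf{k}$ where $k=\eval(z,\rho)$. For $t_1\doteq 1$ and $t_{x_i}\doteq x_i$, the two sides both interpret to $\mathbf{1}$ and to $\rho(x_i)=\mathbf{n_i}$ respectively (the latter because $\eval(x_i,\rho)=n_i$). For $t_{zu}\doteq t_z t_u$ and $t_{z+u}\doteq t_z+t_u$, one has $\eval(zu,\rho)=\eval(z,\rho)\cdot\eval(u,\rho)$ and $\eval(z+u,\rho)=\eval(z,\rho)+\eval(u,\rho)$, so the axioms reduce to the well-known finite isomorphisms $\mathbf{k\cdot j}\cong \mathbf{k}\times\mathbf{j}$ and $\mathbf{k+j}\cong \mathbf{k}+\mathbf{j}$, which one constructs explicitly by case analysis on the summands / projections on the factors. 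For the last schema $t_z\doteq t_u$ whenever $\Nplus\vDash z\equiv u$, the hypothesis gives $\eval(z,\rho)=\eval(u,\rho)$, so both sides interpret to the same $\mathbf{k}$ and the identity lambda term witnesses the isomorphism.

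The main point to verify with care is that $k=\eval(z,\rho)$ is indeed a well-defined \emph{positive} natural number, so that the type $\mathbf{k}$ makes sense, even though $z$ is an ordinary polynomial allowed to carry negative monomial coefficients. Here I would use the hypothesis that $z$ is a positive polynomial in the sense of the footnote of Section~\ref{sec:hsistar} (it computes to an element of $\Nplus$ for every replacement of variables by elements of $\Nplus$), together with the assumption of the theorem that $\rho$ interprets each variable by a finite type $\mathbf{n_i}$ with $n_i\in\Nplus$. This is exactly the point at which the restriction on $\rho$ is used, and it is the step that would need more work if one tried to replace finite base types by, e.g., ordinals in Cantor normal form, as alluded to in the paper.

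Assembling the induction, every rule preserves the property ``$\inter{f}{\rho}\cong\inter{g}{\rho}$ for every finite-type environment $\rho$'', so the conclusion follows. I expect the only potentially delicate bookkeeping to be the uniformity of the HSI isomorphisms across arbitrary type instantiations, but this is essentially the same uniformity already used in the proof of Theorem~\ref{fac:hsi:sound}.
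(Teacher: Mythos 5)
Your proposal is correct and follows essentially the same route as the paper: induction on the HSI* derivation, with the HSI axioms and congruence rules handled by the standard (uniform) lambda-term isomorphisms, and the $t_z$-axioms reduced via $\inter{t_z}{\rho}=\mathbf{k}$ to the finite isomorphisms $\mathbf{k_1k_2}\cong\mathbf{k_1}\times\mathbf{k_2}$ and $\mathbf{k_1+k_2}\cong\mathbf{k_1}+\mathbf{k_2}$, with positivity of $z$ and the finiteness of $\rho$ guaranteeing $\eval(z,\rho)\in\Nplus$ exactly as you note. The only cosmetic difference is that the paper writes out all the witnessing lambda terms explicitly and packages the product/sum cases into a separate lemma ($\mathbf{k}\cong\inter{p}{\rho}$ for any ordinary polynomial $p$ with positive coefficients, proved by induction on $p$), whereas you construct the two needed finite isomorphisms directly.
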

\begin{proof} The proof is by induction on the derivation.
  We first give explicit isomorphisms for the axioms of HSI:
  \begin{itemize}
  \item $f  \doteq f$ is interpreted with the identity lambda term $\lambda x.x$ in both directions;
  \item $f+g  \doteq g+f $ is interpreted by $\lambda x. \case{x}{x_1}{\inr{x_1}}{x_2}{\inr{x_2}}$ in both directions;
  \item $(f+g)+h  \doteq f+(g+h)$ is interpreted by 
    \[\case{x}{x_1}{\inl{\inl{x}}}{x_2}{\case{x_2}{x_{21}}{\inl{\inr{x_2}}}{x_{22}}{\inr{x_{22}}}}\]
    and
    \[\case{x}{x_1}{\case{x_1}{x_{11}}{\inl{x_{11}}}{x_{12}}{\inr{\inl{x_{12}}}}}{x_2}{\inr{\inr{x_2}}};\]
  \item $f g  \doteq g f$ is interpreted by $\lambda x.\pair{\snd{x}}{\fst{x}}$ in both directions;
  \item $(f g) h  \doteq f (g h)$ is interpreted by \[\lambda x. \pair{\pair{\fst{x}}{\fst{\snd{x}}}}{\snd{\snd{x}}}\] and \[\lambda x. \pair{\fst{\fst{x}}}{\pair{\snd{\fst{x}}}{\snd{x}}};\]
  \item $f(g+h)  \doteq f g + f h$ is interpreted by \[\lambda x. \case{\snd{x}}{x_1}{\inl{\pair{\fst{x}}{x_1}}}{x_2}{\inr{\pair{\fst{x_1}}{\inr{\snd{x_1}}}}} \] and \[\lambda x. \case{x}{x_1}{\pair{\fst{x_1}}{\inl{\snd{x_1}}}}{x_2}{\pair{\fst{x_2}}{\inr{\snd{x_2}}}};\]
  \item $f 1 \doteq f$ is interpreted by $\lambda x. \fst{x}$ and $\lambda x. \pair{x}{\star}$;
  \item $f^1 \doteq f$ is interpreted by $\lambda x. x \star$ and $\lambda x y. x$;
  \item $1^f \doteq 1$ is interpreted by $\lambda x. \star$ and $\lambda x y. \star$;
  \item $f^{g+h} \doteq f^g f^h$ is interpreted by
    \[
    \lambda x. \pair{\lambda y. x (\inl{y})}{\lambda y. x (\inr{y})}
    \]
    and
    \[
    \lambda x y. \case{y}{y_1}{(\fst{x})y_1}{y_2}{(\snd{x})y_2};
    \]
  \item $(f g)^h \doteq f^h g^h$ is interpreted by $\lambda x. \pair{\lambda y. \fst{x y}}{\lambda y. \snd{x y}}$ and $\lambda x y. \pair{(\fst{x})y}{(\snd{x})y}$;
  \item $(f^g)^h \doteq f^{g h}$ is interpreted by $\lambda x y. x (\fst{y}) (\snd{y})$ and $\lambda x y z. x \pair{z}{y}$.
  \end{itemize}

  The congruence and equality rules are handled using the induction hypotheses:
  \begin{itemize}
  \item Given an interpretation of $f\doteq g$, i.e. $\Phi:\inter{f}{\rho}\to\inter{g}{\rho}$ and $\Psi:\inter{g}{\rho}\to\inter{f}{\rho}$ such that
    \begin{align*}
      \lambda x. \Phi(\Psi x) &=_{\beta\eta} \lambda x. x\\
      \lambda y. \Psi(\Phi y) &=_{\beta\eta} \lambda y. y,
    \end{align*}
    we just swap the order of the two equations in order to interpret $g\doteq f$.
  \item Given interpretations of $f\doteq g$ and $g\doteq h$ by four terms
    \begin{align*}
      \Phi_1 &:\inter{f}{\rho}\to\inter{g}{\rho} & \Phi_2 &:\inter{g}{\rho}\to\inter{h}{\rho}\\
      \Psi_1 &:\inter{g}{\rho}\to\inter{f}{\rho} & \Psi_2 &:\inter{h}{\rho}\to\inter{g}{\rho},
    \end{align*}
    we interpret $f\doteq h$ by composing $\Phi_1$ and $\Phi_2$, and $\Psi_1$ and $\Psi_2$.
  \item Given interpretations of $f_1\doteq g_1$ and $f_2\doteq g_2$ by four terms
    \begin{align*}
      \Phi_1 &:\inter{f_1}{\rho}\to\inter{g_1}{\rho} & \Phi_2 &:\inter{f_2}{\rho}\to\inter{g_2}{\rho}\\
      \Psi_1 &:\inter{g_1}{\rho}\to\inter{f_1}{\rho} & \Psi_2 &:\inter{g_2}{\rho}\to\inter{f_2}{\rho},
    \end{align*}
    we interpret $f_1^{f_2}\doteq g_1^{g_2}$ by using the terms
    \begin{align*}
      \Phi &= \lambda x. \lambda y. \Phi_1(x (\Psi_2 y))\\
      \Psi &= \lambda x. \lambda y. \Psi_1(x (\Phi_2 y)).
    \end{align*}
    The fact that $\Phi$ and $\Psi$ are mutually inverse w.r.t. $=_{\beta\eta}$ is proved by using the $\eta$-axiom
    \[
    \lambda x. \lambda y. x y =_{\beta\eta} \lambda x. x.
    \]
  \item Given interpretations of $f_1\doteq g_1$ and $f_2\doteq g_2$ by four terms
    \begin{align*}
      \Phi_1 &:\inter{f_1}{\rho}\to\inter{g_1}{\rho} & \Phi_2 &:\inter{f_2}{\rho}\to\inter{g_2}{\rho}\\
      \Psi_1 &:\inter{g_1}{\rho}\to\inter{f_1}{\rho} & \Psi_2 &:\inter{g_2}{\rho}\to\inter{f_2}{\rho},
    \end{align*}
    we interpret $f_1 f_2\doteq g_1 g_2$ by using the terms
    \begin{align*}
      \Phi &= \lambda x. \pair{\Phi_1(\fst{x})}{\Phi_2(\snd{x})}\\
      \Psi &= \lambda y. \pair{\Psi_1(\fst{y})}{\Psi_2(\snd{y})}.
    \end{align*}
    The fact that $\Phi$ and $\Psi$ are mutually inverse w.r.t. $=_{\beta\eta}$ is proved by using the $\eta$-axiom
    \[
    \lambda y. \pair{\fst{y}}{\snd{y}} =_{\beta\eta} \lambda y. y.
    \]
  \item Given interpretations of $f_1\doteq g_1$ and $f_2\doteq g_2$ by four terms
    \begin{align*}
      \Phi_1 &:\inter{f_1}{\rho}\to\inter{g_1}{\rho} & \Phi_2 &:\inter{f_2}{\rho}\to\inter{g_2}{\rho}\\
      \Psi_1 &:\inter{g_1}{\rho}\to\inter{f_1}{\rho} & \Psi_2 &:\inter{g_2}{\rho}\to\inter{f_2}{\rho},
    \end{align*}
    we interpret $f_1 + f_2\doteq g_1 + g_2$ by using the terms
    \begin{align*}
      \Phi &= \lambda x. \case{x}{x_1}{\inl{(\Phi_1 x_1})}{x_2}{\inr{(\Phi_2 x_2)}}\\
      \Psi &= \lambda y. \case{y}{y_1}{\inl{(\Psi_1 y_1)}}{y_2}{\inr{(\Psi_2 y_2)}}.
    \end{align*}
    The fact that $\Phi$ and $\Psi$ are mutually inverse w.r.t. $=_{\beta\eta}$ is proved by using the $\eta$-axiom for sums twice. Once we use it with
    \begin{align*}
      M := &y\\
      N := &\casetwo{\big(\case{z}{y_1}{\inl{(\Psi_1 y_1)}}{y_2}{\inr{(\Psi_2 y_2)}}\big)}{x_1}{\inl{(\Phi_1 x_1)}}{x_2}{\inr{(\Phi_2 x_2)}},
    \end{align*}
    and the second time with $M:=y$, $N:=z$.
  \end{itemize}

  It remains to interpret the rest of the axioms of HSI*, those that involve $t_z$-terms.

  \begin{itemize}
  \item $t_1\doteq 1$ is interpreted as the isomorphism $\mathbf{1}\cong\inter{1}{\rho}$ i.e. $\mathbf{1}\cong \mathbf{1}$ using the lambda term $\lambda x.x$ in both directions;
  \item $t_{x_i}\doteq x_i$ is interpreted as $\mathbf{k}\cong \mathbf{k}$, for $k=\eval(x_i,\rho)$, by the lambda term $\lambda x.x$ in both directions;
  \item $t_{z u} \doteq t_z t_u$ is interpreted as $\mathbf{k}\cong \mathbf{k_1}\times \mathbf{k_2}$, for $k=\eval(t_{z u},\rho)$, $k_1=\eval(t_{z},\rho)$, and $k_2=\eval(t_{u},\rho)$, by the lambda term of Lemma~\ref{lemma};
  \item $t_{z + u} \doteq t_z + t_u$ is interpreted as $\mathbf{k}\cong \mathbf{k_1}+\mathbf{k_2}$, for $k=\eval(t_{z u},\rho)$, $k_1=\eval(t_{z},\rho)$, and $k_2=\eval(t_{u},\rho)$, by the lambda term of Lemma~\ref{lemma};
  \item $t_z \doteq t_u$ is interpreted as $\mathbf{k}\cong \mathbf{k}$, for $k=\eval(z,\rho)=\eval(u,\rho)$, by the lambda term $\lambda x.x$ in both directions.
  \end{itemize}

\end{proof}

\begin{lemma}\label{lemma} Let $p$$\in\Estar$ be an ordinary polynomial with positive coefficients (possibly containing $t_z$-terms), $k\in \Nplus$, and $\rho$ be an interpretation such that $k = \eval(p,\rho)$. Then, $\mathbf{k}\cong \inter{p}{\rho}$.
\end{lemma}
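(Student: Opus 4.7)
The plan is to prove the lemma by structural induction on $p$, after establishing two auxiliary isomorphism results about the numeral types $\mathbf{k}$. Since the grammar of ordinary polynomials with positive coefficients restricts $p$ to be built from $1$, variables $x_i$, constants $t_z$, and the operations $+$ and $\cdot$, the induction has three base cases and two step cases, and in each step case the real content has already been isolated into the auxiliary lemmas.

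The first auxiliary isomorphism I would prove is $\mathbf{a} + \mathbf{b} \cong \mathbf{a+b}$ for all $a, b \in \Nplus$, by induction on $b$. The base case $b=1$ is immediate because, by the very definition of the notation $\mathbf{k}$, the type $\mathbf{a+1}$ is syntactically identical to $\mathbf{a} + \mathbf{1}$. For the inductive step, I would rewrite $\mathbf{a} + \mathbf{b+1}$ as $\mathbf{a} + (\mathbf{b} + \mathbf{1})$, invoke the isomorphism interpretation of the HSI axiom $(f+g)+h\doteq f+(g+h)$ (already exhibited in the proof of Theorem~\ref{thm:hsistar:sound}) to reassociate to $(\mathbf{a}+\mathbf{b})+\mathbf{1}$, and then combine the inductive hypothesis with the sum-congruence rule to obtain $\mathbf{a+b}+\mathbf{1} = \mathbf{(a+b)+1}$. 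The second auxiliary isomorphism is $\mathbf{a}\times\mathbf{b}\cong\mathbf{a\cdot b}$, again by induction on $b$: the base uses the interpretation of $f\cdot 1\doteq f$, and the step expands $\mathbf{a}\times\mathbf{b+1}$ via distributivity into $\mathbf{a}\times\mathbf{b} + \mathbf{a}\times\mathbf{1}$, applies the inductive hypothesis together with $f\cdot 1\doteq f$ to reach $\mathbf{a\cdot b}+\mathbf{a}$, and finally invokes the first auxiliary lemma to conclude $\mathbf{a\cdot b + a} = \mathbf{a\cdot(b+1)}$.

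With these in hand, the main induction on $p$ is immediate. For $p\in\{1, x_i, t_z\}$, the interpretation $\inter{p}{\rho}$ is already a numeral type $\mathbf{k}$ with the correct $k=\eval(p,\rho)$: this uses the definition of $\mathbf{1}$ for $p=1$, the assumption of Theorem~\ref{thm:hsistar:sound} that $\rho$ maps variables to types of the form $\mathbf{k}$ for $p=x_i$, and the clause $\inter{t_z}{\rho}=\mathbf{\eval(t_z,\rho)}$ for $p=t_z$. The identity $\lambda x.x$ witnesses the isomorphism in each of these cases. For $p=p_1+p_2$, setting $k_i=\eval(p_i,\rho)$, the inductive hypothesis gives $\inter{p_i}{\rho}\cong\mathbf{k_i}$; composing with sum-congruence and the first auxiliary lemma yields $\inter{p_1+p_2}{\rho} = \inter{p_1}{\rho}+\inter{p_2}{\rho} \cong \mathbf{k_1}+\mathbf{k_2} \cong \mathbf{k_1+k_2} = \mathbf{k}$. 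The case $p=p_1p_2$ is analogous, using product-congruence and the second auxiliary lemma.

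The \emph{main obstacle} is really only bookkeeping: one must check that every link in these compositions is justified by an axiom or rule of HSI whose isomorphism interpretation was explicitly exhibited in the proof of Theorem~\ref{thm:hsistar:sound}, so that the whole chain is indeed an isomorphism of types. Crucially, no negative type ever appears, because $p$ is assumed to have positive coefficients; the possible presence of $t_z$-subterms plays no role beyond furnishing additional base cases of the induction, since each such term is already interpreted as a numeral type by the clause defining $\inter{t_z}{\rho}$.
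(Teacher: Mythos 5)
Your proof is correct and follows essentially the same route as the paper: structural induction on $p$, with the three base cases handled by the identity term and the two step cases handled by the congruence isomorphisms for $+$ and $\times$ together with the numeral isomorphisms $\mathbf{k_1}+\mathbf{k_2}\cong\mathbf{k}$ and $\mathbf{k_1}\times\mathbf{k_2}\cong\mathbf{k}$. The only difference is that you explicitly prove these last two isomorphisms by induction on the numerals, whereas the paper simply calls them ``obvious''; your version is the more complete one.
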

\begin{proof}
  We do induction on $p$.
  \begin{itemize}
  \item When $p= 1$, we have $\eval(1,\rho) = 1 = k$, so $\mathbf{k} = \mathbf{1} \cong \mathbf{1} = \inter{1}{\rho}$ is established by using $\lambda x.x$ in both directions.
  \item When $p=x_i$, we have $\eval(x_i,\rho) = \rho(x_i) = k$, so $\mathbf{k} \cong \inter{x_i}{\rho}$ by $\lambda x.x$ in both directions.
  \item When $p=t_z$, we have $\eval(t_z,\rho) = k$, so $\mathbf{k}\cong\inter{t_z}{\rho}$ by $\lambda x.x$ in both directions.
  \item When $p= p_1 + p_2$, we have 
    \begin{multline*}
      k=\eval(p,\rho)=\eval(p_1+p_2,\rho)=\\=\eval(p_1,\rho)+\eval(p_2,\rho)=k_1+k_2
    \end{multline*}
    for some $k_1,k_2\in\Nplus$. By applying the induction hypothesis twice, we obtain $\mathbf{k_1}\cong\inter{p_1}{\rho}$ and $\mathbf{k_2}\cong\inter{p_2}{\rho}$, therefore,
    \[
    \mathbf{k} \cong \mathbf{k_1} + \mathbf{k_2} \cong \inter{p_1}{\rho} + \inter{p_2}{\rho} = \inter{p}{\rho}
    \]
    using the obvious isomorphism $\mathbf{k}\cong \mathbf{k_1}+\mathbf{k_2}$, that holds when $k=k_1+k_2$ and the lambda terms interpreting congurence for ``+'' from the proof of the previous theorem.
  \item When $p = p_1 p_2$, we have 
    \begin{multline*}
      k=\eval(p,\rho)=\eval(p_1 p_2,\rho)=\\=\eval(p_1,\rho) \eval(p_2,\rho)=k_1 k_2
    \end{multline*}
    for some $k_1,k_2\in\Nplus$. By applying the induction hypothesis twice, we obtain $\mathbf{k_1}\cong\inter{p_1}{\rho}$ and $\mathbf{k_2}\cong\inter{p_2}{\rho}$, therefore,
    \[
    \mathbf{k} \cong \mathbf{k_1} \times \mathbf{k_2} \cong \inter{p_1}{\rho} \times \inter{p_2}{\rho} = \inter{p}{\rho}
    \]
    using the obvious isomorphism $\mathbf{k}\cong \mathbf{k_1}+\mathbf{k_2}$ that holds when $k=k_1 k_2$, and the lambda terms interpreting congurence for ``$\times$'' from the proof of the previous theorem.
  \end{itemize}
\end{proof}

\begin{corollary} Given two types $f, g\in\Ecal$, one can decide whether $\inter{f}{\rho} \cong\inter{g}{\rho}$ or not, and this holds whenever $\rho$ interprets variable by types of form $\mathbf{k}$.
\end{corollary}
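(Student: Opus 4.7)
The plan is to reduce the decidability question to a comparison of two positive natural numbers, by leveraging Theorem~\ref{thm:wilkie} and Theorem~\ref{thm:hsistar:sound} applied to the closed instances of $f$ and $g$. Given $\rho$ mapping $x_i$ to $\mathbf{k_i}$, set $k_f:=\eval(f,\rho)$ and $k_g:=\eval(g,\rho)$. The proposed algorithm is simply to compute $k_f$ and $k_g$ and report isomorphism exactly when $k_f=k_g$. What remains to prove is that $\inter{f}{\rho}\cong \inter{g}{\rho}$ iff $k_f=k_g$.

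First I would establish $\inter{f}{\rho}\cong \mathbf{k_f}$, and analogously for $g$. Let $f^{\rho}\in\Ecal$ be the closed expression obtained by substituting every $x_i$ in $f$ with the numeral $1+\cdots+1$ ($k_i$ summands). The key observation is that $\inter{f^{\rho}}{\emptyset}$ and $\inter{f}{\rho}$ are identical as type expressions, because $\rho(x_i)$ is itself $\mathbf{1}+\cdots+\mathbf{1}$. Since $f^{\rho}$ is closed and evaluates to $k_f$, we have $\Nplus \vDash f^{\rho}\equiv 1+\cdots+1$ ($k_f$ summands); Theorem~\ref{thm:wilkie} then yields $\HSIstar\vdash f^{\rho}\doteq 1+\cdots+1$, and Theorem~\ref{thm:hsistar:sound} converts this derivation into a $\lambda$-term isomorphism $\inter{f}{\rho}\cong \mathbf{k_f}$. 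Combining with the analogous result for $g$ and transitivity of $\cong$ reduces the question to whether $\mathbf{k_f}\cong \mathbf{k_g}$.

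Finally, $\mathbf{k_f}\cong \mathbf{k_g}$ iff $k_f=k_g$: the ``$\Leftarrow$'' direction is the identity lambda term in both directions, while ``$\Rightarrow$'' follows from the soundness content of Theorem~\ref{fac:hsi:sound} applied to the closed numerals $k_f$ and $k_g$, which rules out $\lambda$-term isomorphisms between closed sums-of-ones of different lengths.

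The main obstacle I expect is routine but delicate bookkeeping around the substitution step: one must verify that no coercion is needed between $\inter{f^{\rho}}{\emptyset}$ and $\inter{f}{\rho}$, so that the lambda terms produced by Theorem~\ref{thm:hsistar:sound} serve as witnesses directly for the original $\rho$. A secondary conceptual point is that Macintyre's decidability for $\Nplus$ is not applied to $f$ and $g$ themselves (which could fail to be equivalent in $\Nplus$ even when they are isomorphic at the particular $\rho$ of interest), but rather is bypassed entirely by substituting $\rho$ into $f$ and $g$ and then invoking Wilkie on the resulting closed expressions.
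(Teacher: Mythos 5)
Your proof is correct, but it follows a genuinely different route from the one the paper intends. The paper derives this corollary from the circuit displayed just before Theorem~\ref{thm:hsistar:sound}, namely $\typeof{f}\cong\typeof{g} \Rightarrow \Nplus\vDash f\equiv g \Rightarrow \HSIstar\vdash f\doteq g \Rightarrow \inter{f}{\rho}\cong\inter{g}{\rho}$, so that the isomorphism question is identified with the arithmetic identity $\Nplus\vDash f\equiv g$ and then discharged by Macintyre's decision procedure (Theorem~\ref{fac:E:decidable}); that procedure is uniform in $\rho$, deciding whether the isomorphism holds at \emph{every} finite interpretation. You instead fix $\rho$, substitute the numerals into $f$ and $g$, and decide the isomorphism \emph{at that particular} $\rho$ by computing and comparing the two cardinalities $\eval(f,\rho)$ and $\eval(g,\rho)$, with correctness supplied by Wilkie's Theorem~\ref{thm:wilkie} on the closed instances, Theorem~\ref{thm:hsistar:sound}, and the soundness half of Theorem~\ref{fac:hsi:sound}; Macintyre's theorem is not used at all. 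The two procedures answer slightly different questions --- yours correctly reports ``isomorphic'' at a $\rho$ that identifies two variables even when $\Nplus\not\vDash f\equiv g$ --- and yours is the one matching the pointwise phrasing of the corollary, besides being more elementary: for closed expressions one does not even need the full strength of Wilkie, since every closed term of $\Ecal$ lies in Gurevi\v{c}'s class $\mathcal{L}$ (so Theorem~\ref{fac:L:complete} would do), or one can simply extend the induction of Lemma~\ref{lemma} to exponentials via $\mathbf{m}\to\mathbf{n}\cong\mathbf{n^m}$. The only loose end, which you flag yourself, is the association convention in the numeral $1+\cdots+1$ versus the type $\mathbf{k}$; at worst this costs an extra HSI-derivable associativity isomorphism rather than a literal identity of type expressions, so it is harmless.
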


\section{Conclusion}\label{sec:conclusion}


We showed that existing results from Mathematical Logic allow us to conclude that type isomorphism over $\Ecal$ is recursively axiomatizable, and that a subclass $\mathcal{L}$ of types can be isolated for which type isomorphism is even finitely axiomatizable by the well known High-School Identities and decidable. Our Theorem~\ref{thm:hsistar:sound} allows us to conclude decidability for the whole of $\Ecal$ when base types are finite sets.

These results also apply to questions of cardinality of sets in Constructive Mathematics, and to isomorphism of objects in the corresponding category. However, further work is needed to understand fully their implications in practice.

\subsection{Open Problems}

\subsubsection{Extensions and Practical Importance of the  Levitz Class}

We saw that the class $\mathcal{L}$ of Gurevič is a generalization of the classes $\mathcal{R}(S)$ and $\mathcal{L}(S)$ of Henson and Rubel, which are in turn generalizations of Levitz's class $\mathcal{S}$. We also saw in Example~\ref{example2} that there are two HSI-equal types, one of which is in $\mathcal{L}$ while the other is not. 

Therefore, it does not seem unlikely that the class $\mathcal{L}$ can be further extended. For example, cannot we allow the bases of exponentiation to contain variables in their bases of exponentiation up to a fixed (but arbitrary) height $n$? Would not such a theory also be finitely axiomatizable by HSI?

Another interesting thing to investigate would be what the practical interest of these subclasses is. For example, how many programs of a standard library for functional programming or theorem proving would fall outside (extensions of) $\mathcal{L}$?

\subsubsection{Simpler Completeness Proof for HSI*}
Wilkie's proof of completeness of HSI* relies on two components. 

In Theorem~2.8 of~\cite{wilkie00}, it is shown that each polynomial with exponentiation $f$ can be proved to be equal in HSI* to a positive polynomial with positive coefficients, but with extra variables, some of which are instantiated with witnessing terms $\tau_i$ of the form $p_i^{q_i}$. The proof of this theorem proceeds by induction on the construction of $f$, the difficult case being when $f= f_1^{f_2}$; here, the induction hypothesis is used together with the fact that each positive polynomial can be factored as a monomial and irreducible polynomial. In fact, an enumeration of all possible pairs $\langle\text{ irreducible },\text{ monomial }\rangle$ with the right properties is used, and then, when constructing the representative for $f$ one need only look up in the enumeration. A large number of extra variables will generally be added to the representing ``polynomial''.

The second component of Wilkie's proof uses Differential Algebra to show that the representation from Theorem~2.8 is unique.

If we could obtain a simpler version of Wilkie's proof of Theorem~2.8, that avoids an ad hoc enumeration, it would be easier to interpret Corollary~\ref{cor:hsistar:complete} as a program that given a concrete proof of $\typeof{f}\cong\typeof{g}$ builds a concrete derivation of $\HSIstar\vdash f\doteq g$.

\subsubsection{Efficient Decision Procedures}
The decision procedure of Macintyre \cite{macintyre81}, although primitive recursive, is of exponential complexity. It proceeds by computing, for a given equality one wants to test, an upper bound for the values of the variables, and then does a brute-force search up to that upper bound.

On the other hand, we know that equality between ordinary polynomials can be decided in $O(n \log(n))$ \cite{schwartz80}, and in the context of type isomorphisms decision algorithms with similar complexity have been given by Considine, Gil and Zibin \cite{considine00,gil05,gil07}.

Is to possible to obtain a decision algorithm for type isomorphism over $\Ecal$ with less-than-exponential complexity? Is it at least possible to do so for some subclass of terms like $\mathcal{L}$?




\subsection{Other Related Work}

Bruce, Di Cosmo and Longo \cite{bruce92} contains a more modern presentation of Soloviev's result \cite{soloviev83}.

Do\v{s}en and Petri\'{c} \cite{dosen97} show type isomorphism in the context of symmetric monoidal closed categories is finitely axiomatizable and decidable.

Di Cosmo and Dufour \cite{cosmo05} show that if one considers the structure $\Nplus\cup\{0\}$, decidability and non-finite-axiomatizability of arithmetical equality still hold. It is known that that there are true arithmetical equalities concerning $0$ that do not hold as type isomorphisms, an observation attributed to Alex Simpson in \cite{fiore06}. We have not treated the empty type here. This is motivated by our own interests in using type isomorphism over $\Ecal$: in intuitionistic logic it is not necessary to have a dedicated absurdity proposition, since as soon as we have basic arithmetic, the \emph{ex falso quodlibet} rule is derivable by induction.



\section*{Acknowledgments}
I would like to thank Olivier Danvy for being an excellent host at Aarhus University, where this project finally took off. I would also like to thank Alex Simpson for noticing an error at a workshop talk that I gave on the subject, and Gurgen Asatryan for sending me a copy of a hard-to-find paper.

This work is supported by a fellowship from Kurt Gödel Society, Vienna. Last changes were made while the author was working under the frame of the ProofCert ERC grant.



\bibliographystyle{IEEEtran}
\bibliography{sumaxioms}

\end{document}